\documentclass[12pt]{amsart}

\usepackage{upgreek, amssymb, amsmath}
\usepackage[margin=3.5 cm]{geometry}
\usepackage{graphicx}
\usepackage{todonotes}
\usepackage{color}
\usepackage{subfigure}
\newtheorem{theorem}{Theorem}[section]

\newtheorem{prop}{Proposition}[section]

\newtheorem{lemma}[theorem]{Lemma}

\theoremstyle{definition}

\theoremstyle{remark}
\newtheorem{remark}[theorem]{Remark}

\numberwithin{equation}{section}


\DeclareMathOperator{\im}{Im}
\newcommand{\ud}{\mathrm{d}}

\begin{document}

\title[]{Dominated splittings and the spectrum of quasi-periodic Jacobi operators}

\author{C. A. Marx}
\address{Department of Mathematics, Oberlin College}
\email{cmarx@oberlin.edu}





\begin{abstract}
We prove that the resolvent set of any, possibly singular, quasi periodic Jacobi operator is characterized as the set of all energies whose associated Jacobi cocycles induce a dominated splitting. This extends a well-known result by R. A. Johnson for Schr\"odinger operators. 
\end{abstract}

\maketitle

\section{Introduction}

The purpose of this article is to give a dynamical characterization of the spectrum of quasi-periodic Jacobi operators (QPJ). To this end, let $\alpha \in \mathbb{R}^d$ be fixed with components linearly independent over $\mathbb{Q}$ and let $c,v: \mathbb{T}^d \to \mathbb{C}$ be continuous satisfying $v(\mathbb{T}^d) \subseteq \mathbb{R}$. A {\em{quasi-periodic Jacobi operator}} is a family of bounded self-adjoint operators on $\mathit{l}^2(\mathbb{Z})$ of the form,
\begin{equation} \label{eq_jacobiop}
[H_{x} \psi]_n = \overline{c(\mathrm{T}^{n-1}x)} \psi_{n-1} + c(\mathrm{T}^n x) \psi_{n+1} + v(\mathrm{T}^n x) \psi_n ~\mbox{, for $x \in \mathbb{T}^d$,}
\end{equation}
generated upon evaluation of $c,v$ along the trajectories of $\mathrm{T}: \mathbb{T}^d \to \mathbb{T}^d$, $x \mapsto x + \alpha$. Let $\mu$ denote the Haar probability measure on $\mathbb{T}^d$. We are assuming $\log \vert c \vert \in L^1(\mathbb{T}^d, \ud \mu)$, as is commonly done. To simplify notation, we will also set $X:=\mathbb{T}^d$.

Motivated by the now famous almost Mathieu operator where for $d=1$, $v(x) = 2\lambda \cos(2 \pi x)$, $\lambda \in \mathbb{R}$, and $c \equiv 1$, most of the literature on QPJ has so far focussed on the special case $c \equiv 1$, commonly known as quasi-periodic {\em{Schr\"odinger operators}} (QPS). In recent years, a dynamical systems approach to the spectral theory of the latter proved to be extremely fruitful. In particular, this approach allowed for a more global picture of the spectral properties of QPS \cite{Avila_globalthy_published}. 

The relevant dynamical system for QPS are Schr\"odinger cocycles, quasi-periodic $SL_2(\mathbb{R})$-cocycles whose iteration generates solutions to the finite difference equation, $H_x \psi = E \psi$. More generally, a (continuous) {\em{$M_2(\mathbb{C})$-cocycle}} is a dynamical system on $X \times \mathbb{C}^2$ induced by $\mathrm{T}$ and a matrix-valued function $D \in \mathcal{C}(X,M_2(\mathbb{C}))$, defined by $(x,v) \mapsto (\mathrm{T}x, D(x) v)$. We will denote this cocycle map by the pair $(\mathrm{T},D)$.

One fundamental ingredient of above-mentioned dynamical approach to the spectral theory of QPS is a theorem due to R. A. Johnson \cite{Johnson_1986} which characterizes the spectrum in terms of Schr\"odinger cocycles. More specifically, it is shown in \cite{Johnson_1986} that the resolvent set of a QPS is determined by uniformly hyperbolic dynamics of the Schr\"odinger cocycles. We mention that Johnson's theorem has recently been generalized to discrete long-range Schr\"odinger operators \cite{HaroPuig_2013}. The goal of this paper is to find an appropriate extension of Johnson's theorem to QPJ. 

The main problem in extending Johnson's theorem to the more general Jacobi setting is that, whereas Schr\"odinger cocycles are $SL(2,\mathbb{R})$, the relevant cocycles for QPJ are in general not even invertible. In this context, we call a cocycle $(\mathrm{T},D)$ {\em{singular}} if $\det D(x_0) = 0$ for some $x_0 \in X$. For QPJ, singular cocycles automatically arise once the sampling function $c$ has zeros, in which case (\ref{eq_jacobiop}) is called a {\em{singular QPJ}}. 

We mention that QPJ originated in solid states physics, where both $c,v$ are trigonometric polynomials, hence the possibility of $c$ having zeros cannot be excluded in general. For instance, one prominent example of a QPJ relevant in physics is given by extended Harper's model, where for $d=1$ the sampling functions are given by $v(x) = 2 \cos(2 \pi x)$ and $c(x) = \lambda_1 \mathrm{e}^{-2 \pi i (x+ \alpha/2)}  + \lambda_2 + \lambda_3 \mathrm{e}^{2 \pi i (x + \alpha/2)}$ with $\lambda_j \in \mathbb{R}$, $1 \leq j \leq 3$. Proposed by D. J. Thouless in context with the integer quantum Hall effect \cite{Thouless_1983}, extended Harper's model constitutes a singular QPJ for a large set of coupling parameters $(\lambda_1, \lambda_2, \lambda_3)$. Even though several recent works on the spectral theory of QPJ have started to account for singularity (many of which motivated by extended Harper's model) \cite{JitomirskayaMarx_2012, KTao_2011, KTao_2012, AvilaJitomirskayaSadel_2013, BinderVoda_2013, BinderVoda_2014, JitomirskayaMarx_preprint_2014}, an extension of Johnson's theorem to (possibly singular) QPJ has so far been missing. 

The presence of singular cocycles obviously requires a dynamical framework different from uniform hyperbolicity. Recent work on the continuity and positivity of the Lyapunov exponent for Jacobi operators \cite{JitomirskayaMarx_2012, AvilaJitomirskayaSadel_2013, DuarteKlein_2013_1, DuarteKlein_2013_2} indicated the notion ``dominated splitting'' as an appropriate analogue of uniform hyperbolicity, suitable when passing from the Schr\"odinger to the general Jacobi setting. 

A cocycle $(\mathrm{T},D)$ is said to induce a {\em{dominated splitting}} (write $(\mathrm{T},D) \in \mathcal{DS}$) if there exists $N \in \mathbb{N}$ and a continuous, non-trivial splitting of $\mathbb{C}^2 = S_x^{(1)} \oplus S_x^{(2)}$ satisfying $D_N(x) S_x^{(j)} \subseteq S_{\mathrm{T}^N x}^{(j)}$, $1 \leq j \leq 2$, which exhibits uniform domination in the sense,
\begin{equation} \label{eq_defDS}
\dfrac{\Vert D_N(x) v_1 \Vert}{\Vert v_1 \Vert} > \dfrac{\Vert D_N(x) v_2 \Vert}{\Vert v_2 \Vert} ~\mbox{, all $x \in X$,}
\end{equation}
for all $v_j \in S_x^{(j)} \setminus\{0\}$. Here, for $n \in \mathbb{N}$, $D_n(x):= D(\mathrm{T}^{n-1} x) \dots D(x)$ denotes the $n$-th iterate of $(\mathrm{T},D)$ on the fibers, where $D_0(x):=\mathrm{I}$. $\mathcal{DS}$ is a very ``robust property,'' e.g. it is well known \cite{Ruelle_1979} that cocycles inducing a dominated splitting are open in $\mathcal{C}(X,M_2(\mathbb{C}))$, accompanied by continuity, even real analyticity of the (top) {\em{Lyapunov exponent}} (LE) ,
\begin{equation}
L(\mathrm{T},D):= \lim_{n \to +\infty} \frac{1}{n} \int \log \Vert D_n(x) \Vert \ud \mu(x) ~\mbox{.}
\end{equation}

$\mathcal{DS}$ specializes to uniform hyperbolicity ($\mathcal{UH}$) if the matrix cocycles are unimodular. More generally, for any {\em{non-}}singular cocycle, $(\mathrm{T},D) \in \mathcal{DS}$ if and only if $(\mathrm{T},\frac{D}{\sqrt{\mathrm{det} D}}) \in \mathcal{UH}$. The advantage of the notion $\mathcal{DS}$ is however that it makes sense for {\em{both}} singular and non-singular cocycles.

Another feature not present for QPS, is that the relevant cocycles for QPJ ({\em{Jacobi cocycles}}) are not unique. For instance, one possible choice for Jacobi cocycles is given by
\begin{equation} \label{eq_defA}
A^E(x) = \begin{pmatrix} E-v(x) & -\overline{c(\mathrm{T}^{-1} x )} \\
                                                                      c(x)     &  0       \end{pmatrix} ~\mbox{,}
\end{equation}
where $E \in \mathbb{C}$ is the spectral parameter. There are however alternative choices, which, depending on the problem in mind, may be more advantageous. We emphasize that all these choices share that they are singular precisely if $c$ has zeros. We will comment more on the flexibility in the choice of Jacobi cocycles in Sec. \ref{sec_jacobicocycles}. Our results account for this flexibility, and give a dynamical characterization of the spectrum of QPJ applicable for the different choices (cf Theorem \ref{thm_main_ammended}).

To formulate our main result, we recall that for a QPJ, minimality of $\mathrm{T}$ implies that the spectrum of the operators $H_x$ is constant in $x \in X$; we denote this set by $\Sigma$, a compact subset of $\mathbb{R}$.
\begin{theorem} \label{thm_main}
Let $\alpha \in \mathbb{R}^d$ be fixed with components linearly independent over $\mathbb{Q}$. Consider the (possibly singular) quasi-periodic Jacobi operator with frequency $\alpha$, defined in (\ref{eq_jacobiop}), i.e. $\mathrm{T}$ is the translation by $\alpha$ on $\mathbb{T}^d$, and $c,v \in \mathcal{C}(\mathbb{T}^d,\mathbb{C})$ with $v$ real valued and $c \in L^1(\mathbb{T}^d, \ud \mu)$. Then, for the Jacobi cocycle defined in (\ref{eq_defA}), one has
\begin{eqnarray}
\Sigma = \{ E : (\mathrm{T}, A^E) \not \in \mathcal{DS} \} ~\mbox{.} \label{eq_thm_1}
\end{eqnarray} 
\end{theorem}
\begin{remark} \label{rem_thm_main}
\begin{itemize}
\item[(i)] An analogous statement holds for the Jacobi cocycles alternative to (\ref{eq_defA}) which will be described in Sec. \ref{sec_jacobicocycles}, cf. Theorem \ref{thm_main_ammended}. 
\item[(ii)] Our proof of Theorem \ref{thm_main} does in fact not depend on the specifics of the background dynamics induced by translations on $\mathbb{T}^d$. Even though our main motivation for this work are QPJ, the argument we present applies to any uniquely ergodic, invertible map $\mathrm{T}$ on a compact Hausdorff space $X$, where the $\mathrm{T}$-invariant probability measure $\mu$ is topological, i.e. positive on open sets, and the associated Jacobi operators satisfy that $x \mapsto H_x$ is continuous in the operator norm. Here, the latter condition, is needed for Lemma \ref{lem_contisections}. Note that for a compact space X, unique ergodicity with respect to a topological (invariant) measure is equivalent to unique ergodicity and minimality, the latter of which is also known as {\em{strict ergodicity}}.  In particular, Theorem \ref{thm_main} holds for all {\em{almost periodic Jacobi operators}}, i.e. operators of the form (\ref{eq_jacobiop}) where $X$ is a compact topological group, $\mu$ is its Haar probability measure, and $\mathrm{T}$ is translation by a fixed element in $X$ whose orbit is dense.
\item[(iii)] We mention that Johnson's original result extends to Schr\"odinger operators with background dynamics given by a minimal transformation $\mathrm{T}$ on a compact space $X$. Our proof of Theorem \ref{thm_main} requires unique ergodicity of $\mathrm{T}$ (due to Proposition \ref{prop_unifcontractcond}), we however believe that the result should also hold true for the case of merely minimal $\mathrm{T}$.
\end{itemize}
\end{remark}

\vspace{-0.1 cm}

From a dynamical point of view the most interesting aspect of Theorem \ref{thm_main} is that it implies domination outside the spectrum. In particular, complexifying the energy generates $\mathcal{DS}$ with all the ``nice'' properties such dynamics entails. This leads to a dynamical point of view of {\em{Kotani theory}}, which expressed from the point of view of Theorem \ref{thm_main}, studies the limiting properties of the invariant sections as $\im E \to 0+$. We mention that such a dynamical reformulation of Kotani theory has played an important role in a dynamical description of the absolutely continuous spectrum of QPS \cite{AvilaKrikorian_2006, AvilaFayadKrikorian_2011, Avila_prep_ARC_1}, see also \cite{AvilaKrikorian_2013} for an even more general perspective (``monotonic cocycles'').

We structure the paper as follows. Sec. \ref{sec_jacobicocycles} briefly recalls the relation of the Jacobi cocycles to the solutions of the finite difference equations, $H_x \psi = E \psi$. We will also describe alternative choices for Jacobi cocycles appearing in the literature, for which Theorem \ref{thm_main} holds in an analogous form (see Theorem \ref{thm_main_ammended}). 

As mentioned earlier, the main point of this article is to account for {\em{singular}} Jacobi operators. For non-singular operators, Theorem \ref{thm_main} could be obtained by simple adaptions of the proof for Schr\"odinger operators, which we present in Sec. \ref{sec_nonsingularop}.

Sec. \ref{sec_domination} forms the main part of the article, and is devoted to proving $\mathcal{DS}$ outside the spectrum. The latter is done by verifying a {\em{cone condition.}} We outline the strategy in Sec. \ref{subsec_domsplit}, the proof is carried out in Sec. \ref{subsec_conecond}. One noteworthy aspect of our proof is that it explicitly reveals the spectral theoretic meaning of the dynamical quantities involved. The invariant sections of the splitting are shown to be given in terms of the Weyl m-functions ($m_\pm$), with $m_-$ giving rise to the dominating section, cf. (\ref{eq_defsminus}). The key estimate, which verifies the cone condition, is obtained in Proposition \ref{prop_unifcontractcond}. It shows that the derivative of iterates of the Jacobi cocycle along the dominating section decays exponentially in the number of iterates, with a decay rate given by the Lyapunov exponent of the QPJ. Moreover, the angle between the invariant sections of the splitting is shown to be determined by the inverse of the Green's function of the QPJ, cf. (\ref{eq_greensfunctransv_1})-(\ref{eq_greensfunctransv}).

We conclude with Sec. \ref{sec_finishingup} where we show that $\mathcal{DS}$ cannot occur on the spectrum.

\vspace{0.1 cm}
{\bf{Acknowledgment:}} I wish to thank Barry Simon for pointing me to the alternative transfer matrices for Jacobi operators presented in Sec. \ref{sec_jacobicocycles}; this lead to the amended version of Theorem \ref{thm_main}, given in Theorem \ref{thm_main_ammended}.

\vspace{0.1 cm}
{\bf{Concluding remarks:}} {\bf{This is an updated version of the original article which was published in Nonlinearity in 2014 \cite{CMarx_publishedpaper}.}} In August of 2020, it was brought to our attention that Remark 1.2 (ii) in the published version, which concerned the extension of the results of this article from quasi-periodic to the more general situation of almost periodic operators, contained an incorrect statement. Upon correcting this mistake, we realized that the main result of this paper, Theorem \ref{thm_main}, which is formulated and proven for quasi-periodic Jacobi operators, in fact extends to all ergodic Jacobi operators where the base dynamics is induced by a strictly ergodic, invertible map $\mathrm{T}$ on a compact Hausdorff space $X$ for which the associated Jacobi operators satisfy that $x \mapsto H_x$ is continuous in the operator norm. In particular, this allowed to drop the hypothesis of connectedness of the space $X$ as stated in Remark 1.2 (ii) of the original paper. The latter improvement implies that the main results of this article apply to {\em{all}} almost periodic Jacobi operators, irrespective of the base dynamics. 

We would like to thank Jake Fillman for alerting us to the incorrect statement in Remark 1.2 (ii) of the original article, which, ultimately led to the improved result in this updated version. Compared to the original article in \cite{CMarx_publishedpaper}, the only essential changes appear in the proof of Proposition \ref{claim_johnson} of Sec. \ref{sec_finishingup}, where we realized that in equation (5.5) of the published and the present version of the article, a mere upper bound is sufficient for the proving the claim of the proposition. We also updated the list of references.

\section{Jacobi cocycles} \label{sec_jacobicocycles}

Fixing the spectral parameter $E \in \mathbb{C}$, solving $H_x \psi = E \psi$ over $\mathbb{C}^\mathbb{Z}$ can be formulated as iteration of the {\em{measurable}} cocycle\footnote{One can weaken the definition of a (continuous) cocycle, requiring the matrix valued function $D:X \to M_2(\mathbb{C})$ to only be measurable with $\log_+ \Vert D(.) \Vert \in L^1(X, \ud \mu)$, in which case $(\mathrm{T},D)$ is called a {\em{measurable cocycle}}.}  $(\mathrm{T}, B^E)$ applied to a given initial condition $(\begin{smallmatrix} \psi_0 \\ \psi_{-1}\end{smallmatrix})$ for $\psi$,
\begin{equation} \label{eq_B}
B_n^E(x) \begin{pmatrix} \psi_{0} \\ \psi_{-1}  \end{pmatrix} = \begin{pmatrix} \psi_n \\ \psi_{n-1}  \end{pmatrix} ~\mbox{,}
\end{equation}
where $B^E(x) := \frac{1}{c(x)} A^E(x)$ and $A^E$ is given in (\ref{eq_defA}). This iterative procedure is a consequence of the second order difference nature of Jacobi operators. In spectral theory, it is better known as {\em{transfer matrix formalism}}, where the transfer matrix is given by $B^E(x)$.

Notice that since $\log\vert c \vert \in L^1(X,\ud \mu)$, the set $\mathcal{Z}(c):=\{x \in X : c(x) = 0\}$ is necessarily of $\mu$-measure zero. In particular, positivity of $\mu$ on open sets, implies that $(\mathrm{T}, B^E)$ is well-defined and invertible on the full measure, and therefore {\em{dense}}\footnote{In view of the extension of Theorem \ref{thm_main} to general uniquely ergodic systems as discussed in Remark \ref{rem_thm_main} (ii), density of $X_0$ forms the reason for requiring $\mu$ to be strictly positive on non-empty open sets of $X$.}.  $G_\delta$-set, 
\begin{equation} \label{eq_goodset}
X_0:=X \setminus \left( \cup_{n \in \mathbb{Z}} \mathrm{T}^n \mathcal{Z}(c) \right) ~\mbox{.}
\end{equation}

As $B^E(x)$ is only defined for $\mu$-a.e. $x$, it is often more convenient to work with $(\mathrm{T}, A^E)$, which inherits the continuity of the sampling functions $c,v$. We reiterate that presence of zeros in $c(x)$ translates to singularity of  $(\mathrm{T}, A^E)$. 

An alternative choice for the transfer matrix $B^E$ is given by,
\begin{equation} \label{eq_modified_B}
\widetilde{B}^E(x) = \dfrac{1}{c(\mathrm{T}^{-1} x)} \begin{pmatrix} E - v(x) & - \vert c(\mathrm{T}^{-1} x) \vert^2 \\ 1 & 0  \end{pmatrix} ~\mbox{,}
\end{equation}
which induces a measurable cocycle satisfying
\begin{equation}
\vert \det \widetilde{B}^E(x) \vert = 1 ~\mbox{, $\mu$-a.e.}
\end{equation}
We mention that for $E \in \mathbb{R}$, (\ref{eq_modified_B}) even induces a complex symplectic, measurable cocycle.

Thus, for {\em{non-singular}} QPJ where $(\mathrm{T}, \widetilde{B}^E(x))$ is {\em{continuous}}, dynamical considerations reduce directly to the more familiar notion of uniform hyperbolicity. The latter is explored in Sec. \ref{sec_nonsingularop}.

The definition of $\widetilde{B}^E(x)$ is suggested by the ``scaled'' discrete Laplacian in (\ref{eq_jacobiop}) \cite{PasturFigotin_book, DamanikKillipSimon_2010}; more specifically, $\psi \in \mathbb{C}^\mathbb{Z}$ satisfies $H_x \psi = E \psi$ if and only if 
\begin{equation} \label{eq_modified_B_1}
\widetilde{B}_n^E(x)  \begin{pmatrix} c(\mathrm{T}^{-1} x) \psi_0 \\ \psi_{-1}  \end{pmatrix} = \begin{pmatrix} c(\mathrm{T}^{n-1} x) \psi_n \\ \psi_{n-1}  \end{pmatrix} ~\mbox{.}
\end{equation}
We mention that $\widetilde{B}^E$ is particularly natural in view of the Weyl m-function $m_-(x,E)$, cf (\ref{eq_defstildeminus}) \footnote{We mention that the transfer matrix proposed in \cite{PasturFigotin_book, DamanikKillipSimon_2010} is in fact adapted to the Weyl m-function $m_+(x,E)$, hence differs from (\ref{eq_modified_B}). In view of proving presence of a dominated splitting, it is however more natural to adapt the cocycle to $m_-(x,E)$, as the latter will be shown to give rise to the dominating section (cf Sec. \ref{sec_domination}). We mention that all arguments in this note carry over to the cocyclces considered in \cite{PasturFigotin_book, DamanikKillipSimon_2010}, in particular Theorem \ref{thm_main} also applies to those cocycles.}.

For {\em{singular}} QPJ, however, as $\widetilde{B}^E(x)$ is only defined for $\mu$-a.e. $x$, one introduces in analogy to $A^E$ above, 
\begin{equation} \label{eq_modified_A}
\widetilde{A}^E(x) :=  \begin{pmatrix} E - v(x) & - \vert c(\mathrm{T}^{-1} x) \vert^2 \\ 1 & 0  \end{pmatrix} ~\mbox{,}
\end{equation}
which induces the (continuous) cocycle derived from $(\mathrm{T},\widetilde{B}^E)$. $(\mathrm{T}, \widetilde{A}^E)$ thus yields an alternative Jacobi cocycle, which, like $(\mathrm{T}, A^E)$, is singular precisely if $c$ exhibits zeros.

Referring to (\ref{eq_modified_B_1}), observe that the dynamics of the cocycles $(\mathrm{T}, A^E)$ and $(\mathrm{T}, \widetilde{A}^E)$ is related by the measurable conjugacy\footnote{As usual, a {\em{measurable conjugacy}} is a conjugacy between measurable cocycles where the mediating coordinate change in (\ref{eq_conjugacy_modcocycles}) is measurable with $\log \Vert M(.) \Vert , \log \Vert M(.)^{-1} \Vert \in L^1(X,\ud \mu)$. The latter condition guarantees preservation of the LE. Note that log-integrability of the coordinate change under consideration in (\ref{eq_conjugacy_modcocycles}) follows from $\log\vert c \vert \in L^1(X,\ud \mu)$.},
\begin{eqnarray} \label{eq_conjugacy_modcocycles}
\widetilde{M}(\mathrm{T} x)^{-1} \widetilde{A}^E(x) \widetilde{M}(x) = A^E(x) ~\mbox{, } 
\widetilde{M}(x):= \begin{pmatrix} 1 & 0 \\ 0 & \frac{1}{c(\mathrm{T}^{-1} x)} \end{pmatrix} ~\mbox{,}
\end{eqnarray}
which in particular yields equality of the top Lyapunov exponents,
\begin{equation} \label{eq_relLEs}
L(\mathrm{T}, A^E) = L(\mathrm{T}, \widetilde{A}^E) ~\mbox{, } L(\mathrm{T}, B^E) = L(\mathrm{T}, A^E) - \int \log \vert c \vert \ud \mu = L(\mathrm{T},\widetilde{B}^E) ~\mbox{.}
\end{equation}
We mention that in spectral theory, $L(\mathrm{T}, B^E)=L(\mathrm{T},\widetilde{B}^E)$ is usually called {\em{the}} Lyapunov exponent of the QPJ. 

For {\em{non-singular}} Jacobi operators, (\ref{eq_conjugacy_modcocycles}) becomes a {\em{continuous}} conjugacy, whence $(\mathrm{T}, A^E) \in \mathcal{DS}$ if and only if $(\mathrm{T}, \widetilde{A}^E) \in \mathcal{DS}$.

In conclusion, we account for the flexibility in the choice of cocycles associated with the spectral theory of QPJ, formulating our main result also for $(\mathrm{T}, \widetilde{A}^E)$: 
\begin{theorem}[Theorem \ref{thm_main} ammended] \label{thm_main_ammended}
Under the same hypotheses as in Theorem \ref{thm_main}, one has
\begin{eqnarray}
\Sigma = \{ E : (\mathrm{T}, A^E) \not \in \mathcal{DS} \} ~\mbox{.} \label{eq_thm_1a} 
\end{eqnarray} 
The statement also holds when replacing $(\mathrm{T}, A^E)$ by $(\mathrm{T}, \widetilde{A}^E)$.
\end{theorem}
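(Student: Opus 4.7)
My plan is to prove both inclusions of (\ref{eq_thm_1a}) working with $(\mathrm{T}, A^E)$ directly; the corresponding statement for $(\mathrm{T}, \widetilde{A}^E)$ will follow by the measurable conjugacy (\ref{eq_conjugacy_modcocycles}), since the invariant sections I will construct will be realized explicitly in terms of the Weyl $m$-functions $m_\pm(x,E)$, which are intrinsic to $H_x$ and independent of the choice of transfer matrix representative. It is therefore enough to exhibit a continuous, non-trivial splitting $\mathbb{C}^2 = S_x^{(1)} \oplus S_x^{(2)}$ together with uniform domination for $A^E$ whenever $E \notin \Sigma$, and to rule out the existence of any such splitting when $E \in \Sigma$.

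For the inclusion $\mathbb{C}\setminus\Sigma \subseteq \{E : (\mathrm{T},A^E)\in\mathcal{DS}\}$ I first work at $E$ with $\im E \neq 0$, where $m_\pm(x,E)$ exist and depend continuously on $x\in X$ by virtue of the bounded resolvent $(H_x - E)^{-1}$ and the continuity of $c,v$. The sections $S_x^{(1)},S_x^{(2)}$ will be the lines spanned by the boundary values of $m_-, m_+$, reinterpreted as elements of $\mathbb{C}^2$ in the coordinates of $A^E$; these are well-defined on all of $X$, including on $\mathcal{Z}(c)$, precisely because $m_\pm$ are operator-theoretic objects. To verify the domination estimate I will set up an invariant cone field around $S_x^{(1)}$ and bound from above the derivative of the iterates $A^E_N$ restricted to $S_x^{(1)}$; the pointwise exponential rate is $-L(\mathrm{T},A^E)$ by the Oseledets theorem, and unique ergodicity upgrades this to the uniform estimate required for $\mathcal{DS}$, as formalized in Proposition \ref{prop_unifcontractcond}. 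The angle between $S_x^{(1)}$ and $S_x^{(2)}$ is controlled from below by the inverse of the diagonal Green's function, which is bounded since $E$ is in the resolvent set. Real $E \notin \Sigma$ is then handled by openness of both $\mathcal{DS}$ (in $\mathcal{C}(X,M_2(\mathbb{C}))$) and the resolvent set in $\mathbb{R}$, combined with continuous extension of $m_\pm$ to the real resolvent set.

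For the reverse inclusion $\Sigma \subseteq \{E : (\mathrm{T},A^E)\notin\mathcal{DS}\}$ I argue by contradiction: suppose $(\mathrm{T},A^{E_0})\in\mathcal{DS}$ for some $E_0 \in \Sigma$. The continuous invariant splitting, pulled back through the factor $c(x)$ onto the dense set $X_0$ of (\ref{eq_goodset}), produces two formal solutions $\psi_\pm$ of $H_x \psi = E_0 \psi$ which decay exponentially at $\pm\infty$ (with rate $L(\mathrm{T},B^{E_0}) > 0$ on the dominated section, by the conjugacy with the $SL_2$ normalization). Assembling $\psi_\pm$ into a kernel yields a bounded inverse of $H_x - E_0$ on $\ell^2(\mathbb{Z})$ for every $x \in X_0$. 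Since $\Sigma$ is $x$-independent by minimality, this contradicts $E_0 \in \Sigma$.

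The main obstacle is the construction of the dominating section at points of $\mathcal{Z}(c)$, where $A^E$ is singular and the transfer matrix $B^E$ is undefined, so one cannot iterate backwards or directly translate the $\mu$-a.e.\ Oseledets splitting into a continuous one. The remedy is to bypass $B^E$ altogether and define $S_x^{(1)},S_x^{(2)}$ through $m_\pm(x,E)$, whose continuous extension across $\mathcal{Z}(c)$ is automatic from the functional calculus. The role of unique ergodicity, rather than merely minimality, is to convert the a.e.\ exponential decay of $\|A^E_N|_{S^{(1)}}\|$ into the uniform decay that $\mathcal{DS}$ demands; this is precisely where Remark (iii) after Theorem \ref{thm_main} locates the only gap preventing a merely minimal base.
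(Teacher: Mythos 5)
Your construction for the direction $\mathbb{C}\setminus\Sigma\subseteq\{E:(\mathrm{T},A^E)\in\mathcal{DS}\}$ --- sections spanned by the Weyl functions with $m_-$ giving the dominating one, a cone/contraction estimate made uniform by unique ergodicity, and transversality controlled by the inverse diagonal Green's function --- is essentially the paper's argument for non-real $E$. The gap is your treatment of \emph{real} $E$ in the resolvent set: you invoke openness of $\mathcal{DS}$, but openness runs the wrong way. Knowing $(\mathrm{T},A^{E+i\epsilon})\in\mathcal{DS}$ for every $\epsilon>0$ and $A^{E+i\epsilon}\to A^E$ proves nothing about $A^E$, since $\mathcal{DS}$ is open, not closed, and the boundary of $\mathcal{DS}$ is exactly where one expects to land as $\im E\to 0$. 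Moreover, the uniform contraction estimate (Proposition \ref{prop_unifcontractcond}) does not extend verbatim to all real $E\in\rho$: the Furman-type upgrade from a.e.\ to uniform convergence needs an upper bound on $\log\vert m_-(\cdot,E)\vert$, and for real $E\in\rho\setminus\rho_-$ the function $m_-(x,E)$ may have poles in $x$ coming from discrete eigenvalues of the half-line restrictions $H_{x,-}$, cf.\ (\ref{eq_defrhominus}). The paper therefore handles real $E$ by a separate argument: uniform transversality of $S_\pm(\cdot,E)$ obtained by letting $\epsilon\to 0+$ in the Green's-function bound (\ref{eq_greensfunctransv}), extension of the invariance relations by continuity and density of $X_0$, and then domination of some iterate via the diagonal conjugacy (\ref{eq_conjds}), the Combes--Thomas bound (\ref{eq_combesthomas}) (non-degenerate Lyapunov spectrum) and unique ergodicity. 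Some argument of this kind is required; openness supplies none of it. (A side issue: the merely \emph{measurable} conjugacy (\ref{eq_conjugacy_modcocycles}) does not transport $\mathcal{DS}$, which is a continuous, uniform notion; the $(\mathrm{T},\widetilde{A}^E)$ case must be run in parallel with $\widetilde{s}_\pm$, as the paper does, unless $c$ is zero-free.)

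For the inclusion $\Sigma\subseteq\{E:(\mathrm{T},A^E)\notin\mathcal{DS}\}$ your route (build a bounded inverse from decaying solutions) differs from the paper's, and this is where the singularity of $c$ bites: domination only yields \emph{relative} decay of the minorated direction against the dominating one. Converting this into absolute exponential decay of $\psi_\pm$, and above all into a \emph{bounded} inverse, requires controlling $\vert\det B^E_n(x)\vert=\vert c(\mathrm{T}^{-1}x)\vert/\vert c(\mathrm{T}^{n-1}x)\vert$, i.e.\ the small values of $\vert c\vert$ along the orbit; these are controlled only sub-exponentially, for a.e.\ $x$ and not uniformly in the lattice site, so the natural kernel estimates degrade by unbounded factors of $\vert c(\mathrm{T}^n x)\vert^{-1/2}$ and no Schur-type bound follows from what you wrote. ``Assembling $\psi_\pm$ into a kernel yields a bounded inverse'' is precisely the statement that needs proof, and the claimed pointwise decay rate $L(\mathrm{T},B^{E_0})$ for \emph{every} $x\in X_0$ is also unjustified. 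The paper avoids all of this: assuming $(\mathrm{T},A^E)\in\mathcal{DS}$, Proposition \ref{claim_johnson} shows every nontrivial solution grows exponentially along a subsequence of one of $\mathbb{Z}_\pm$, the subsequence chosen (by minimality) so that the $c$-factors stay bounded below; hence $E$ is not a generalized eigenvalue, and since $\mathcal{DS}$ \emph{is} open (used there in the correct direction), $E$ is not a limit point of generalized eigenvalues either, so $E\notin\Sigma$ by Sch'nol--Berezanskii (\ref{eq_schnol}). If you wish to keep the resolvent-construction route you must supply the missing uniform or summable off-diagonal estimates despite the zeros of $c$; as sketched, that step would fail.
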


\section{Non-singular Jacobi operators} \label{sec_nonsingularop}

As mentioned earlier, the spectral theory for {\em{non}}-singular Jacobi operators can be described by {\em{continuous}} cocycles with unimodular determinant similar to the Schr\"odinger case. In particular, since $(\mathrm{T}, A^E)$ are $(\mathrm{T}, \widetilde{A}^E)$ are continuously conjugate, Theorem \ref{thm_main} is equivalently formulated as
\begin{equation} \label{eq_thm_main_nonsing}
\mathbb{C}\setminus \Sigma = \{E: (\mathrm{T},\widetilde{B}^E) \in \mathcal{UH} \} ~\mbox{.}
\end{equation}
Thus, Theorem \ref{thm_main} can be concluded from arguments along the lines of \cite{Johnson_1986}. We briefly outline these straightforward adaptations.

To prove the ``$\subseteq$'' statement in (\ref{eq_thm_main_nonsing}), let $E \in \mathbb{C}\setminus \Sigma$ be given. It is well known that \cite{SackerSell_1974, SackerSell_1976, SackerSell_1978, Selgrade_1975} (see also, \cite{Zhang_notes_2013} for a more recent proof) a (continuous) cocycle $(\mathrm{T}, D)$ is {\em{not}} $\mathcal{UH}$ if and only if for {\em{some}} $x_0 \in X$ and $v \in \mathbb{C}^2 \setminus \{0\}$,
\begin{equation}
\sup_{n \in \mathbb{Z}} \Vert D_n(x_0) v \Vert < \infty ~\mbox{.}
\end{equation} 
Thus, using (\ref{eq_modified_B_1}), $(\mathrm{T}, \widetilde{B}^E) \not \in \mathcal{UH}$ would imply that for some $x_0 \in X$, $H_{x_0} \psi = E \psi$ admits a bounded, non-trivial solution over $\mathbb{C}^\mathbb{Z}$, whence (see also \ref{eq_schnol}) $E \in \Sigma$ - a contradiction.

To prove the ``$\supseteq$'' statement in (\ref{eq_thm_main_nonsing}), let $E$ such that $(\mathrm{T},\widetilde{B}^E) \in \mathcal{UH}$. Then, all non-trivial solutions of $H_x \psi = E \psi$ over $\mathbb{C}^\mathbb{Z}$ increase exponentially on at least one of $\mathbb{Z}_\pm$. Thus, the Sch'nol-Berezanskii theorem \cite{Schnol_1957, Berezanskii_1968} (see (\ref{eq_schnol})) and openness of $\mathcal{UH}$ in the (continuous) cocycles with unimodular determinant implies $E \in \mathbb{C}\setminus \Sigma$ (cf Sec. \ref{sec_finishingup}).

\section{Domination outside the spectrum} \label{sec_domination}
In this section we prove the ``$\supseteq$''-statement in (\ref{eq_thm_1a}). We start with some remarks on dominated splittings. 
\subsection{Dominated splittings and cone conditions} \label{subsec_domsplit}
For $1\leq j \leq 2$, let $e_j$ denote the standard basis of $\mathbb{C}^2$ and set $E_j:=\mathrm{Span} \{e_j\}$. Taking advantage of the manifold structure of $\mathbb{PC}^2$ induced by the charts,
\begin{eqnarray} \label{eq_charts}
\phi_1: \mathbb{PC}^2 \setminus E_1 \to \mathbb{C} ~\mbox{, } \phi_1(\mathrm{Span}\{ (\begin{smallmatrix} v_1 \\ v_2 \end{smallmatrix}) \} ) = \frac{v_1}{v_2} ~\mbox{,} \nonumber \\
\phi_2: \mathbb{PC}^2 \setminus E_2 \to \mathbb{C} ~\mbox{, } \phi_2(\mathrm{Span}\{ (\begin{smallmatrix} v_1 \\ v_2 \end{smallmatrix}) \} ) = \frac{v_2}{v_1} ~\mbox{,}
\end{eqnarray}
in local coordinates a given matrix $D = (\begin{smallmatrix} a & b \\ c & d \end{smallmatrix}) \neq 0$ acts on $\mathbb{PC}^2 \setminus \ker D$ as a linear fractional transformation. 

We will denote the coordinate-free action of $D$ on $\mathbb{PC}^2 \setminus \ker D$ by $D \cdot z$ and its derivative, respectively, by $\partial D \cdot z$. Moreover, we will find it convenient to identify $\mathbb{PC}^2$ with $\overline{\mathbb{C}}:=\mathbb{C} \cup \{\infty\}$ extending $\phi_2$ in (\ref{eq_charts}) to all of $\mathbb{PC}^2$.

First, observe that a cocycle $(\mathrm{T}, D) \in \mathcal{DS}$ if and only if {\em{some}} iterate is continuously conjugate to a diagonal cocycle, i.e. there exists $N \in \mathbb{N}$ and a coordinate change $M \in \mathcal{C}(X, GL(2,\mathbb{C}))$ such that
\begin{equation} \label{eq_ds_euiv}
M(\mathrm{T}^N x)^{-1} D_N(x) M(x) = \begin{pmatrix} \lambda_1(x) & 0 \\ 0 & \lambda_2(x) \end{pmatrix} ~\mbox{,}
\end{equation}
where $\lambda_j \in \mathcal{C}(X,\mathbb{C})$, $1 \leq j \leq 2$, satisfy
\begin{equation} \label{eq_ds_euiv_1}
\vert \lambda_1(x) \vert > \vert \lambda_2(x) \vert ~\mbox{, all $x \in X$ .}
\end{equation}

In particular, a necessary condition for $(\mathrm{T}, D) \in \mathcal{DS}$ is a non-degenerate Lyapunov spectrum, i.e.
\begin{equation} \label{eq_nondegenerate}
L(\mathrm{T}, D) > \frac{1}{2} \int \log \vert \det D(x) \vert \ud \mu(x) ~\mbox{.}
\end{equation}

A well-known technique to detect $\mathcal{DS}$ is to verify a {\em{cone condition}}: Given an $M_2(\mathbb{C})$-cocycle $(\mathrm{T},D)$, a conefield for $(\mathrm{T},D)$ is an open subset $U \subset X \times \mathbb{PC}^2$ of the form $\cup_{x \in X} \{x\} \times U_x$ such that, for all $x \in X$, $\overline{U_x}$ is non-empty, properly contained in $\mathbb{PC}^2$, and $\overline{U_x} \cap \ker D(x) = \emptyset$. A conefield $U = \cup_{x \in X} \{x\} \times U_x$ for $(\mathrm{T},D)$ is said to satisfy a cone condition if there exists $N \in \mathbb{N}$ such that for every $x \in X$, one can show that $D_N(x) \cdot \overline{U_x} \subseteq U_{\mathrm{T}^N x}$. It is known (see e.g. \cite{Avila_2011}, or \cite{AvilaJitomirskayaSadel_2013} for singular matrix cocycles) that verifying a cone condition implies $\mathcal{DS}$. In particular, note that if $s_1(x)$ denotes the dominating section for $(\mathrm{T},D)$ and $N \in \mathbb{N}$ is as in the cone condition, then one necessarily has
\begin{equation} \label{eq_dominationcrit}
\sup_{x \in X} \vert \partial D_N(x) \cdot s_1(x) \vert < 1 ~\mbox{.}
\end{equation}

Conversely, suppose for some $N \in \mathbb{N}$ a given invariant section $s(.) \in \mathcal{C}(X,\overline{\mathbb{C}})$, transverse to $\mathrm{ker} D$, satisfies the uniform contraction condition (\ref{eq_dominationcrit}), then $(\mathrm{T},D)$ admits a conefield satisfying a cone condition, whence $(\mathrm{T},D) \in \mathcal{DS}$ with $s(x)$ determining the dominating section.

\subsection{Proof of Theorem  \ref{thm_main_ammended}, ``$\supseteq$''-statement} \label{subsec_conecond}

Set $\rho:=\mathbb{C} \setminus \Sigma$. Throughout this section, let $E \in \rho$ be fixed. We will prove the ``$\supseteq$''-statement in (\ref{eq_thm_1a}) by showing that if $E \in \rho$, then both $(\mathrm{T},A^E)$ and $(\mathrm{T},\widetilde{A}^E)$ admit an invariant section such that (\ref{eq_dominationcrit}) holds for some $N \in \mathbb{N}$. As discussed in the end of Sec. \ref{subsec_domsplit}, the latter implies that both Jacobi cocycles induce a $\mathcal{DS}$. Here, as we will argue, the dominating, invariant section is provided from the spectral theory of Jacobi operators. 

To this end, let $m_\pm(x, E)$ denote the standard {\em{Weyl m-functions}} \cite{Teschl_book_2000}, 
\begin{equation} \label{eq_weyl}
m_\pm(x,E):= \langle \delta_{\pm 1} , (H_{x, \pm} - E)^{-1} \delta_{\pm 1} \rangle ~\mbox{,}
\end{equation}
defined in terms of the positive (negative) half-line operator $H_{x,\pm}:= P_\pm H_x P_\pm$, where $P_\pm$ is the orthogonal projection onto the closure of $\mathrm{Span}\{ \delta_n ~\mbox{, } n \in \mathbb{Z}_\pm \}$ and $\delta_n$, $n \in \mathbb{Z}$, are the elements of the standard basis of $\mathit{l}^2(\mathbb{Z})$, i.e. $[\delta_n]_m = \delta_{n,m}$.

In view of $(\mathrm{T},A^E)$, we consider
\begin{eqnarray}   \label{eq_defsminus}
s_-(x,E) & := & -c(T^{-1} x) m_-(x,E)  ~\mbox{, } \\
s_+(x,E) & := & -\{ \overline{c(T^{-1} x)} m_+(T^{-1} x , E) \}^{-1}  \nonumber ~\mbox{,}
\end{eqnarray}
and for $(\mathrm{T},\widetilde{A}^E)$, respectively,
\begin{eqnarray}  \label{eq_defstildeminus}
\widetilde{s}_-(x,E) & := & - m_-(x,E)  ~\mbox{, } \\
\widetilde{s}_+(x,E) & := & - \{\vert c(T^{-1} x)\vert^2 m_+(T^{-1} x , E) \}^{-1} \nonumber ~\mbox{.}
\end{eqnarray}
Clearly, for all $E \in \mathbb{C}\setminus\mathbb{R}$,
\begin{equation} \label{eq_mfuncboundscomplex}
0 < \vert m_\pm(x,E) \vert \leq \dfrac{1}{\vert \im E \vert} ~\mbox{,}
\end{equation}
thus $s_\pm(x,E), \widetilde{s}_\pm(x,E)$ are well-defined with values in $\overline{\mathbb{C}}$.

To see that (\ref{eq_defsminus}) and (\ref{eq_defstildeminus}) are actually well defined for all $E \in \rho$ (i.e. undefined expressions of the form ``$0 \times \infty$'' do not occur), first observe that $H_x$ and $H_{x,-} \oplus H_{x,+}$ only differ by a finite rank perturbation, hence their essential spectra must agree. In particular, any real $E$ in the resolvent set of $H_x$ is either in the resolvent sets of both $H_{x,\pm}$, i.e. $m_\pm(x,E) \in \mathbb{C}$ ,\footnote{In contrast to $H_x$, the resolvent sets for $H_{x,\pm}$ may in general depend on $x$.} or in the discrete spectrum\footnote{As usual, the {\em{discrete spectrum}} of a bounded operator is defined as the set of {\em{isolated}} eigenvalues of {\em{finite}} multiplicity; the remaining elements of the spectrum define the {\em{essential spectrum}}.} of at least one of $H_{x,\pm}$, i.e. $m_\pm(x,E) = \infty$ corresponding to a pole at $E$. Thus, undefined expressions in (\ref{eq_defsminus})-(\ref{eq_defstildeminus}) of the form ``$0 \times \infty$'' are excluded: indeed, as $c(\mathrm{T}^{-1} x) = 0$ implies $H_x = H_{x,-} \oplus H_{\mathrm{T}^{-1} x,+}$, any $E \in \sigma_{\mathrm{disc}}(H_{x,-}) \cup \sigma_{\mathrm{disc}}(H_{\mathrm{T}^{-1} x,+})$ would automatically be an eigenvalue of $H_x$, thereby violating our assumption that $E \in \rho$.

We claim:
\begin{lemma} \label{lem_contisections}
\begin{equation} \label{eq_conti}
s_\pm(. , .), \widetilde{s}_\pm(. , .) \in \mathcal{C}(X \times \rho,\overline{\mathbb{C}}) ~\mbox{.}
\end{equation}
\end{lemma}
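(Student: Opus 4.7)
The plan is to reduce continuity of $s_\pm$ and $\widetilde{s}_\pm$ as maps into $\overline{\mathbb{C}} = \mathbb{PC}^2$ to joint continuity of the full-operator Green's function
\[
G_x(n,m;E) := \langle \delta_n, (H_x - E)^{-1} \delta_m \rangle, \qquad (x,E) \in X \times \rho,
\]
as a scalar-valued function. The latter is routine: continuity of $c,v$ on the compact space $X$ gives operator-norm continuity of $x \mapsto H_x$, and the second resolvent identity then yields joint norm-continuity of $(x,E) \mapsto (H_x - E)^{-1}$ on the open set $X \times \rho$, so each matrix entry $G_x(n,m;E)$ is continuous as a $\mathbb{C}$-valued function.

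To carry out the reduction, I would apply the Schur complement to the $2$-dimensional projection $P$ onto $\operatorname{span}(\delta_{-1}, \delta_0)$, using the standard Weyl recursion $m_-(x,E)^{-1} = v(T^{-1}x) - E - |c(T^{-2}x)|^2 m_-(T^{-1}x,E)$ and its $+$-analogue at $T^{-1}x$. This yields closed-form expressions of $s_\pm$ and $\widetilde{s}_\pm$ as rational combinations of the four Green's function values $G_x(n,m;E)$, $n,m \in \{-1,0\}$; for instance,
\[
s_-(x,E) \;=\; \frac{G_x(-1,0;E)}{G_x(0,0;E)}, \qquad \widetilde{s}_-(x,E) \;=\; -\frac{\det\bigl[G_x(n,m;E)\bigr]_{n,m \in \{-1,0\}}}{G_x(0,0;E)},
\]
with symmetric formulas for $s_+$ and $\widetilde{s}_+$. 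Away from their $0/0$ set, such ratios of continuous $\mathbb{C}$-valued functions are automatically continuous as maps into $\overline{\mathbb{C}} = \mathbb{PC}^2$. At the isolated points where the basic representation above breaks down (because of a degeneracy in $\psi^{(+)}_0$ or $\psi^{(-)}_0$), one can pass to an equivalent representation built from $G_x(\cdot, m; E)$ for some $m \geq 1$ where the Wronskian factorization $G_x(n,m;E) \propto \psi^{(-)}_{n \wedge m} \psi^{(+)}_{n \vee m}$ exhibits non-vanishing; existence of such $m$ follows from nontriviality of the decaying solutions, and the different local representations agree on overlaps by uniqueness of $\psi^{(\pm)}$ up to scalar.

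The single non-routine step, which I expect to be the main obstacle, is verifying that no genuine $0/0$ indeterminacy arises on $X \times \rho$. Tracking the Schur bookkeeping, a simultaneous zero of numerator and denominator in any of the four ratios forces $c(T^{-1}x) = 0$ together with $E \in \sigma_{\mathrm{disc}}(H_{x,-}) \cup \sigma_{\mathrm{disc}}(H_{T^{-1}x,+})$. But the paragraph preceding Lemma \ref{lem_contisections} has already ruled out this configuration for $E \in \rho$: when $c(T^{-1}x) = 0$, $H_x$ decouples as $H_{x,-} \oplus H_{T^{-1}x,+}$, so any discrete eigenvalue of either summand would be an eigenvalue of $H_x$, contradicting $E \in \rho$. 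Thus the four rational expressions extend to continuous $\overline{\mathbb{C}}$-valued maps on all of $X \times \rho$, finishing the proof.
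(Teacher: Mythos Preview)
Your route differs from the paper's. The paper shows directly that the half-line Weyl functions $m_\pm(x,E)$ are jointly continuous as $\overline{\mathbb{C}}$-valued maps, by a two-case argument on $H_{x_0,\pm}$: if $E_0$ lies in the resolvent set of $H_{x_0,\pm}$, continuity is immediate from norm-continuity of the resolvent; if $E_0$ is in the discrete spectrum of $H_{x_0,\pm}$, one appeals to the standard stability of isolated eigenvalues of finite multiplicity under norm-continuous perturbations (Kato, Sec.~IV.5). Once $m_\pm$ is continuous, $s_\pm$ and $\widetilde{s}_\pm$ are products of continuous $\overline{\mathbb{C}}$-valued factors with no $0\times\infty$, and continuity follows. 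Your approach via the full-line resolvent is attractive because $(H_x-E)^{-1}$ is globally bounded on $X\times\rho$, so the half-line discrete-spectrum case never arises; the cost is the $0/0$ bookkeeping for the ratio representations.

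There is a gap in your last paragraph: you conflate the $0\times\infty$ indeterminacy in the \emph{definition} of $s_\pm$ (which is what the paragraph before the lemma rules out) with the $0/0$ indeterminacy in your \emph{ratio representation}. These are different obstructions. The Schur identity gives $G_x(-1,0;E)=s_-(x,E)\,G_x(0,0;E)$ whenever $m_-$ is finite, so your ratio for $s_-$ is $0/0$ precisely when $G_x(0,0;E)=0$, which happens whenever $E$ is a discrete eigenvalue of one of the two half-line operators obtained by cutting at site~$0$; this does \emph{not} require $c(T^{-1}x)=0$. Hence the decoupling argument you quote does not dispose of your $0/0$. Your reference-point shift can repair this, but you must actually verify it always succeeds and that the alternate ratios still compute $s_-$ off $X_0$ (the Wronskian factorization you invoke is only available on $X_0$, where in fact no $0/0$ occurs to begin with). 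Finally, the displayed formula for $\widetilde{s}_-$ should be rechecked; since $\widetilde{s}_-=-m_-$, it is cleaner to extract $m_-$ directly from the Schur complement than to pass through a $2\times 2$ determinant.
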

\begin{proof}
It suffices to show joint continuity of $m_\pm(x,E)$. Let $(x_0,E_0) \in X \times \rho$ be fixed and arbitrary. As outlined above, there are two possible situations to consider. 

If $E_0$ is in the resolvent set of $H_{x_0,\pm}$, basic resolvent estimates (see e.g. Theorem 3.15 in \cite{Kato_book}) imply that $(H_{x,\pm} - E)^{-1}$ exists and is jointly continuous in $(x,E)$ (w.r.t operator norm for the $x$-dependence \footnote{In view of Remark \ref{rem_thm_main} (ii), we mention the hypothesis of Theorem 3.15 in \cite{Kato_book} requires that $x \mapsto H_x$ be continuous in operator norm.}) in some open neighborhood of $(x_0,E_0)$; in particular, $m_\pm(x,E)$ is jointly continuous at $(x_0,E_0)$. 

Continuity for the case that $E_0$ is in the discrete spectrum of $H_{x_0,\pm}$ follows from well-known facts on the continuity of a finite system of isolated eigenvalues of finite multiplicity for a norm-continuous family of bounded operators (see e.g. Sec. IV.5 in \cite{Kato_book}): 
Specifically, let 
\begin{equation}
P_{\sigma_1(H_{x_0, \pm})}(x_0):= \frac{1}{2 \pi i} \int_{\partial B_r(E_0)} (H_{x_0, \pm} - z)^{-1} \ud z ~\mbox{,}
\end{equation}
be the spectral projection of $H_{x_0, \pm}$ onto $\{ E_0 \}$, where $r>0$ is such that  $\overline{B_r(E_0)} \cap \sigma(H_{x_0, \pm}) = \{E_0 \}$. 

Then, from Sec. IV.5 in \cite{Kato_book} (see also \cite{Simon_analyis_operatortheory}, Theorem 2.3.8), it is known that there exists $\delta>0$ such that for all $x$ with $\vert x - x_0 \vert < \delta$, one has:
\begin{itemize}
\item[(i)] $\sigma(H_{x, \pm})$ can be partioned into the compact sets $\sigma_1(H_{x, \pm})$ and $\sigma_2(H_{x, \pm})$ such that $\sigma_1(H_{x, \pm}) \subseteq B_r(E_0)$ and $\sigma_2(H_{x, \pm}) \subseteq \mathbb{C} \setminus \overline{B_r(E_0)}$.
\item[(ii)] The spectral projections 
\begin{equation}
P_{\sigma_1(H_{x, \pm})}(x):= \frac{1}{2 \pi i} \int_{\partial B_r(E_0)} (H_{x, \pm} - z)^{-1} \ud z ~\mbox{,}
\end{equation}
depend continuously (in norm topology) on $x$ and are unitarily equivalent to $P_{\sigma_1(H_{x_0, \pm})}(x_0)$. In particular, the spectrum of $H_{x, \pm}$ inside $B_r(E_0)$ is discrete with dimension given by $1 \leq \dim \mathrm{Ran} P_{\sigma_1(H_{x_0, \pm})}(x_0) < \infty$.
\end{itemize}

Thus, using (i)-(ii), for $\vert x - x_0 \vert < \delta$, decompose 
\begin{equation}
H_{x, \pm} = H_{x, \pm}^{(1)} \oplus H_{x, \pm}^{(2)} ~\mbox{,}
\end{equation}
where $H_{x, \pm}^{(1)}:= H_{x, \pm} P_{\sigma_1(H_{x, \pm})}(x)$ and $H_{x, \pm}^{(2)}:= H_{x, \pm} (I - P_{\sigma_1(H_{x, \pm})}(x))$. Then, all $E \in B_r(E_0)$ are in the resolvent set of $H_{x, \pm}^{(2)}$, in particular, as above, $(H_{x, \pm}^{(2)} - E)^{-1}$ is jointly continuous in $(x,E) \in (x_0 - \delta, x_0 + \delta) \times B_r(E_0)$.

Therefore, joint continuity of $m_\pm(x, E)$ at $(x_0,E_0)$ as $\overline{\mathbb{C}}$-valued functions, simply reduces to the continuity of the $\overline{\mathbb{C}}$-valued functions,
\begin{equation} \label{eq_katolemma}
\langle \delta_{\pm 1} , (H_{x, \pm}^{(1)} - E)^{-1} \delta_{\pm 1} \rangle ~\mbox{, for } E \in B_r(E_0) ~\mbox{, } \vert x - x_0 \vert < \delta ~\mbox{,}
\end{equation}
associated with the {\em{finite dimensional}} operators $H_{x, \pm}^{(1)}$. Here, (\ref{eq_katolemma}) has poles at the eigenvalues of $H_{x, \pm}^{(1)}$ for $E \in B_r(E_0)$.
\end{proof}

Recall, that the definition of the full-measure set $X_0 \subseteq X$ given in (\ref{eq_goodset}), guarantees that $c(\mathrm{T}^k x) \neq 0$, $\forall k \in \mathbb{Z}$, whenever $x \in X_0$. For all $x \in X_0$, this in turn implies existence of solutions $\psi_\pm(x,E)$ of $H_x \psi = E \psi$ over $\mathbb{C}^\mathbb{Z}$ which are never zero, are $\mathit{l}^2$ at $\pm \infty$, and from (\ref{eq_B}), respectively (\ref{eq_modified_B}), one has
\begin{equation}
s_\pm(x,E) = \dfrac{ \psi_\pm(-1,x,E)}{\psi_\pm(0,x,E)} ~\mbox{, } \widetilde{s}_\pm(x,E) = \dfrac{ \psi_\pm(-1,x,E)}{c(\mathrm{T}^{-1} x) \psi_\pm(0,x,E)} ~\mbox{, $x \in X_0$,}
\end{equation}
and
\begin{eqnarray} \label{eq_invariance_2}
- m_-(\mathrm{T} x,E)^{-1} = \begin{cases} (E-v(x)) - \overline{c(\mathrm{T}^{-1} x)} s_-(x,E) ~\mbox{,} \\ 
                                                                            (E-v(x)) - \vert c(\mathrm{T}^{-1} x)\vert^2 \widetilde{s}_-(x,E) ~\mbox{.} \end{cases}                                                                          
\end{eqnarray}

Moreover, uniqueness of the fundamental solutions $\psi_\pm(x,E)$ up to scalar multiples implies that for all $x \in X_0$,
\begin{equation} \label{eq_invariance}
A^E(x) \cdot s_\pm(x,E) = s_\pm(\mathrm{T} x,E) ~\mbox{, } 
\widetilde{A}^E(x) \cdot \widetilde{s}_\pm(x,E) = \widetilde{s}_\pm(\mathrm{T} x,E) ~\mbox{.}
\end{equation}

We emphasize that outside $X_0$ above fundamental solutions $\psi_\pm(x,E)$ do not exist; indeed, $c(\mathrm{T}^k x) =0$ for some $k \in \mathbb{Z}$, implies decoupling of $H_x$ whence any solution $\psi$ of $H_x \psi = E \psi$ which is $\mathit{l}^2$ at $\pm \infty$ has to vanish {\em{identically}} in a neighborhood of $\pm \infty$ otherwise $E$ is an eigenvalue of $H_x$. 

In order to extend (\ref{eq_invariance_2}) and the invariance relations in (\ref{eq_invariance}) to all of $X$, we use the following continuity arguments: Since $X_0$ is dense in $X$, continuity of both sides of (\ref{eq_invariance_2}) as $\overline{\mathbb{C}}$-valued functions implies that (\ref{eq_invariance_2}) extends to all of $X$.  Moreover, one has:
\begin{lemma} \label{lem_transversality}
For all $E \in \rho$ and all $x \in X$, $s_-(x,E)$ is transverse to $\ker A^E(x)$, similarly $\widetilde{s}_-(x,E)$ is transverse to $\ker \widetilde{A}^E(x)$. Moreover, for all $x \in X$
\begin{equation}
A^E(x) \cdot s_-(x,E) = s_-(\mathrm{T} x,E) ~\mbox{, } 
\widetilde{A}^E(x) \cdot \widetilde{s}_-(x,E) = \widetilde{s}_-(\mathrm{T} x,E) ~\mbox{.}
\end{equation}
\end{lemma}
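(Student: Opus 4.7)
The plan is to reduce the lemma to a short case analysis on the ``singular'' locus $\mathcal{S} := \{x \in X : c(x) c(\mathrm{T}^{-1} x) = 0\}$, exploiting the decoupling of $H_x$ that occurs there and the hypothesis $E \in \rho$. Outside $\mathcal{S}$, $A^E(x)$ is invertible, so transversality is automatic; since the invariance identity is already known on the dense set $X_0$ (by (\ref{eq_invariance})) and both sides are continuous $\overline{\mathbb{C}}$-valued functions (Lemma \ref{lem_contisections}), once transversality is confirmed at a singular point the invariance there follows by density. So the work is concentrated on $\mathcal{S}$.

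A direct computation in the chart $\phi_2$ of (\ref{eq_charts}) shows that $\ker A^E(x)$ projects to $\infty \in \overline{\mathbb{C}}$ whenever $c(\mathrm{T}^{-1}x) = 0$, and to the finite point $\kappa(x) := (E-v(x))/\overline{c(\mathrm{T}^{-1}x)}$ when $c(x) = 0$ and $c(\mathrm{T}^{-1}x) \neq 0$; the doubly-degenerate case $c(x) = c(\mathrm{T}^{-1}x) = 0$ forces $E \neq v(x)$ (otherwise $\delta_0$ would be a bona fide eigenvector of $H_x$ with eigenvalue $v(x) = E$), so the kernel is again at $\infty$. In the first case, the decoupling $H_x = H_{x,-} \oplus H_{\mathrm{T}^{-1}x,+}$ gives $\sigma(H_{x,-}) \subseteq \Sigma$, so $E \in \rho$ forces $m_-(x,E) \in \mathbb{C}$, whence $s_-(x,E) = -c(\mathrm{T}^{-1}x) m_-(x,E) = 0 \neq \infty$. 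In the second case, the extension of (\ref{eq_invariance_2}) to $X$ asserted just above the lemma reads $(E-v(x)) - \overline{c(\mathrm{T}^{-1}x)} s_-(x,E) = -m_-(\mathrm{T}x,E)^{-1}$, so $s_-(x,E) = \kappa(x)$ would be equivalent to $m_-(\mathrm{T}x,E) = \infty$; but now $c(x) = 0$ gives the decoupling $H_x = H_{\mathrm{T}x,-} \oplus H_{x,+}$, placing such an $E$ in $\Sigma$ and contradicting $E \in \rho$.

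With transversality confirmed pointwise on $X$, the invariance is a one-line consequence of the linear fractional formula for the action of $A^E(x)$ on $\overline{\mathbb{C}}$ combined with the extended (\ref{eq_invariance_2}):
\begin{equation*}
A^E(x) \cdot s_-(x,E) \;=\; \dfrac{c(x)}{(E-v(x)) - \overline{c(\mathrm{T}^{-1}x)} s_-(x,E)} \;=\; -c(x) m_-(\mathrm{T}x,E) \;=\; s_-(\mathrm{T}x,E).
\end{equation*}
The $\widetilde{s}_-$ / $\widetilde{A}^E$ statement is obtained in exactly the same way, using the second branch of (\ref{eq_invariance_2}) and the analogous identification of $\ker \widetilde{A}^E(x)$. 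I expect the principal obstacle to be the second case above: one must translate the algebraic condition ``$s_-(x,E) \in \ker A^E(x)$'' into the spectral condition ``$m_-(\mathrm{T}x,E) = \infty$,'' then match this with the correct decoupling of $H_x$ to conclude $E \in \Sigma$.
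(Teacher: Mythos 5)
Your proposal is correct and follows essentially the same route as the paper's proof: the identical case analysis at zeros of $c(\mathrm{T}^{-1}x)$ and $c(x)$ (including the doubly degenerate case forcing $E \neq v(x)$), the same use of the extension of (\ref{eq_invariance_2}) to all of $X$ together with the decouplings $H_x = H_{x,-}\oplus H_{\mathrm{T}^{-1}x,+}$ and $H_x = H_{\mathrm{T}x,-}\oplus H_{x,+}$ to rule out a pole of $m_-$, and the same density-plus-continuity extension of the invariance relation from $X_0$ to $X$. The direct linear-fractional computation you add for the invariance is a harmless, equivalent alternative to that last step.
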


\begin{proof}
We will focus on $(\mathrm{T}, A^E)$, the argument for $(\mathrm{T}, \widetilde{A}^E)$ being similar. 

First observe that for $E \in \rho$, $A^E(x) \not \equiv 0$ since $c(x) = c(\mathrm{T}^{-1}x)=0$ would require $E - v(x) \neq 0$ for $E$ to be in the resolvent set of $H_x$. In particular, $\dim \ker A^E(x) \leq 1$. 

Based on (\ref{eq_defA}), $\ker A^E(x)$ is non-trivial if one of the following two situations applies:

If $c(\mathrm{T}^{-1}x) = 0$, $s_-(x,E) = 0$ which is automatically transverse to $\ker A^E(x) = E_2 \simeq \infty$.

If $c(x)=0$, we may assume $c(\mathrm{T}^{-1}x) \neq 0$, otherwise consider above. Then, $s_-(x,E) \simeq \mathrm{Span} (\begin{smallmatrix} 1 \\ s_-(x,E) \end{smallmatrix}) = \ker A^E(x)$ if and only if
\begin{equation}
(E-v(x)) - \overline{c(\mathrm{T}^{-1}x)} s_-(x,E) = 0 ~\mbox{,}
\end{equation}
which as (\ref{eq_invariance_2}) holds on all of $X$ would imply a pole of $m_-(\mathrm{T} x, . )$ a $E$, or equivalently, $E \in \sigma_{\mathrm{disc}}( H_{\mathrm{T}x,-} )$. The latter however is impossible for any $E \in \rho$ as $c(x) = 0$ yields $H_x = H_{\mathrm{T}x,-} \oplus H_{x,+}$.

Finally, transversality of $s_-(x,E)$ to $\ker A^E(x)$ for all $x \in X$, implies that \newline $A^E(.) \cdot s_-(.,E)$ is well defined and continuous with values in $\overline{\mathbb{C}}$, whence density of $X_0$ allows to extend (\ref{eq_invariance}) to all of $X$. 
\end{proof}

Based on the discussion in Sec. \ref{subsec_domsplit}, we show that $(\mathrm{T}, A^E), (\mathrm{T}, \widetilde{A}^E) \in \mathcal{DS}$ verifying the contraction condition given in (\ref{eq_dominationcrit}). Recall that by the Combes-Thomas estimate (see e.g. \cite{Teschl_book_2000}, Lemma 2.5, for a formulation for Jacobi operators), 
\begin{equation} \label{eq_combesthomas}
L(\mathrm{T}, B^E) \geq \kappa \cdot \mathrm{dist}(E;\Sigma) > 0 ~\mbox{, $E \in \rho$,}
\end{equation}
where, uniformly over any compact neighborhood of $\Sigma$, $\kappa>0$ can be chosen to only depend on $\Vert c \Vert_\infty$. In particular, using (\ref{eq_nondegenerate}) and (\ref{eq_relLEs}), (\ref{eq_combesthomas}) shows that for any $E \in \rho$ both $(\mathrm{T}, A^E), (\mathrm{T}, \widetilde{A}^E)$ have a non-degenerate Lyapunov spectrum.

In view of the following, we let
\begin{equation} \label{eq_defrhominus}
\rho_-:=\mathbb{C}\setminus ( \Sigma \cup \left(  \cup_{x \in X} \sigma_{\mathrm{disc}}(H_{x,-}) \right)) ~\mbox{.}
\end{equation}
Clearly, $\mathbb{C} \setminus \mathbb{R} \subseteq \rho_-$, moreover $E \in \rho_-$ if $\vert E \vert \geq  2 \Vert c \Vert_\infty + \Vert v \Vert_\infty$.
\begin{prop} \label{prop_unifcontractcond}
For any $E \in \rho_-$ one has
\begin{equation} \label{eq_estimderiv}
\sup_{x \in X} \left\vert \partial A_n^E(x) \cdot s_-(x,E) \right\vert \leq \Vert c \Vert_\infty^2 ~ \mathrm{e}^{-2 (L(\mathrm{T},A^E) + o(1))} \mathrm{e}^{- 2 (n-1) ( L(\mathrm{T},B^E) + o(1) )}  ~\mbox{,}
\end{equation}
as $n \to +\infty$. An analogous estimate holds for $\widetilde{A}^E(x)$.
\end{prop}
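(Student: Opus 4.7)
The plan is to compute $|\partial A^E(x)\cdot s_-(x,E)|$ explicitly as a product of factors involving $c$ and $m_-$, and then apply a uniform ergodic theorem on the resulting Birkhoff-type sums. The starting point is the M\"obius action of $A^E(x)$ in the chart $\phi_2$: taking $(1,s_-(x,E))^T$ as vector representative, a direct differentiation together with $\det A^E(x)=c(x)\overline{c(\mathrm{T}^{-1}x)}$ and the invariance identity $(E-v(x))-\overline{c(\mathrm{T}^{-1}x)}s_-(x,E)=-m_-(\mathrm{T}x,E)^{-1}$ from (\ref{eq_invariance_2}) yields the one-step formula
\begin{equation*}
|\partial A^E(x)\cdot s_-(x,E)|=|c(x)|\,|c(\mathrm{T}^{-1}x)|\,|m_-(\mathrm{T}x,E)|^2,
\end{equation*}
which extends from $X_0$ to all of $X$ by continuity of both sides and density of $X_0$ (Lemma~\ref{lem_transversality}).

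By the chain rule, the $n$-step derivative factors as $\prod_{k=0}^{n-1}|c(\mathrm{T}^k x)||c(\mathrm{T}^{k-1}x)||m_-(\mathrm{T}^{k+1}x,E)|^2$. Regrouping the $c$-product gives $|c(\mathrm{T}^{-1}x)||c(\mathrm{T}^{n-1}x)|\prod_{k=0}^{n-2}|c(\mathrm{T}^k x)|^2$, and estimating the two endpoint factors by $\|c\|_\infty^2$ produces
\begin{equation*}
|\partial A_n^E(x)\cdot s_-(x,E)|\leq\|c\|_\infty^2\prod_{k=0}^{n-2}|c(\mathrm{T}^k x)|^2\prod_{k=0}^{n-1}|m_-(\mathrm{T}^{k+1}x,E)|^2.
\end{equation*}
Taking logarithms converts (\ref{eq_estimderiv}) into an upper bound on a pair of Cesàro averages: one of $\log|c|$ over $n-1$ orbit points, and one of $\log|m_-(\cdot,E)|$ over $n$ orbit points.

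For $E\in\rho_-$, Lemma~\ref{lem_contisections} gives $m_-(\cdot,E)\in\mathcal{C}(X,\mathbb{C})$, and since $E$ lies in the resolvent set of every $H_{x,-}$, $|m_-(\cdot,E)|$ is bounded above uniformly in $x$, hence $\log|m_-(\cdot,E)|$ is upper semicontinuous and bounded above; similarly $\log|c|\leq\log\|c\|_\infty$ is upper semicontinuous and in $L^1(X,\mathrm{d}\mu)$. Unique ergodicity of $\mathrm{T}$ on the topologically supported probability space $(X,\mu)$ then delivers, by the upper-semicontinuous version of the uniform ergodic theorem, the uniform-in-$x$ Cesàro upper bounds $\frac{1}{n-1}\sum_{k=0}^{n-2}\log|c(\mathrm{T}^k x)|\leq\int\log|c|\,\mathrm{d}\mu+o(1)$ and $\frac{1}{n}\sum_{k=0}^{n-1}\log|m_-(\mathrm{T}^{k+1}x,E)|\leq\int\log|m_-(\cdot,E)|\,\mathrm{d}\mu+o(1)$. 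The two integrals are identified via a Thouless-type identity: the one-step formula above combined with the Oseledets interpretation of $s_-$ as the top Lyapunov direction of $A^E$ (the $\ell^2$-direction at $-\infty$ is expanding forward, non-degenerate by the Combes-Thomas bound (\ref{eq_combesthomas})) gives $\int\log|m_-(\cdot,E)|\,\mathrm{d}\mu=-L(\mathrm{T},A^E)$, while $\int\log|c|\,\mathrm{d}\mu=L(\mathrm{T},A^E)-L(\mathrm{T},B^E)$ is a rearrangement of (\ref{eq_relLEs}). Substituting these into the log-bound and collecting the arithmetic $2(n-1)(L(A^E)-L(B^E))-2nL(A^E)=-2L(A^E)-2(n-1)L(B^E)$ yields exactly the exponent in (\ref{eq_estimderiv}).

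The principal obstacle is the uniform Cesàro estimate for $\log|c|$, which is only upper semicontinuous (and possibly $-\infty$ on the $\mu$-null zero set of $c$), so the standard Furman-type uniform Birkhoff theorem for continuous summands does not apply. This is precisely where unique ergodicity of $\mathrm{T}$ on the topological space $(X,\mu)$ is essential: its extension to upper semicontinuous summands still provides a uniform Cesàro upper bound, and since (\ref{eq_estimderiv}) is itself only an upper bound, this is enough. A secondary technical point is the Thouless-type identity $\int\log|m_-(\cdot,E)|\,\mathrm{d}\mu=-L(\mathrm{T},A^E)$, which, although formally standard, must be verified in the singular Jacobi setting by combining the Oseledets decomposition of $(\mathrm{T},A^E)$ with the pointwise formula above and the a.e.\ Birkhoff theorem, using the non-degeneracy of the Lyapunov spectrum on $\rho_-$ guaranteed by (\ref{eq_combesthomas}).
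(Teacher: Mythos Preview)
Your proof is correct and follows essentially the same route as the paper: compute the one-step derivative via the M\"obius action and (\ref{eq_invariance_2}), telescope by the chain rule, and then control the resulting Ces\`aro sums of $\log|c|$ and $\log|m_-|$ uniformly in $x$ using the upper-semicontinuous extension of Furman's uniform ergodic theorem. The only cosmetic difference is that the paper imports the identity $\int\log|m_-(\cdot,E)|\,\mathrm{d}\mu=-L(\mathrm{T},A^E)$ directly from Teschl's book, whereas you sketch it via the Oseledets interpretation of $s_-$; both are adequate and the subsequent arithmetic is identical.
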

\begin{remark}
The proof also shows that the upper bound in (\ref{eq_estimderiv}) is optimal. Thus, as $L(\mathrm{T},A^E) \geq \int \log\vert c\vert \ud \mu$, $N$ in the definition of $\mathcal{DS}$ (see (\ref{eq_defDS})) will diverge whenever $L(\mathrm{T},B^E) \to 0$ as $\mathrm{dist}(E;\Sigma) \to 0+$.
\end{remark}

\begin{proof}
For brevity, we will focus on the cocycle $(\mathrm{T}, A^E)$. First observe that for all $x \in X$, $s_-(x,E), m_-(\mathrm{T}x,E) \neq \infty$ as $E \in \rho_-$, whence from (\ref{eq_invariance_2}) we conclude
\begin{equation}
A^E(x) \cdot z = (\phi_2 \circ A^E \circ \phi_2^{-1})(z) = \dfrac{c(x)}{(E-v(x)) - c(\mathrm{T}^{-1} x) z} ~\mbox{,}
\end{equation}
{\em{locally}} about $z=s_-(x,E)$ for all $x \in X$.

Thus, again using (\ref{eq_invariance_2}), we compute
\begin{eqnarray} \label{eq_deriv}
\partial A^E(x) \cdot s_-(x,E) & = & \dfrac{ c(x) \overline{c(\mathrm{T}^{-1} x)}}{\left( \left(E-v(x)\right) - \overline{c(\mathrm{T}^{-1} x)} s_-(x,E) \right)^2} \\
                                              & = & c(x) \overline{c(\mathrm{T}^{-1} x)} m_-^2(\mathrm{T} x, E) ~\mbox{.}
\end{eqnarray}

From (\ref{eq_invariance}), (\ref{eq_deriv}), the chain rule implies
\begin{eqnarray} \label{eq_deriv_1}
\partial A_n^E(x) \cdot s_-(x,E) 
                                                   & = & \prod_{j=0}^{n-1} \partial A^E(\mathrm{T}^j x) \cdot  \left( A_{j}^E(x) \cdot s_-(x,E) \right) \nonumber \\
                                                   & = & \left( \prod_{j=0}^{n-1} c(\mathrm{T}^j x)   \right) \left( \prod_{j=-1}^{n-2} \overline{c(\mathrm{T}^j x)} \right) \left( \prod_{j=1}^{n} m_-^2(\mathrm{T}^j x , E) \right) ~\mbox{, } n \geq 2 ~\mbox{.}                                                   
\end{eqnarray}

Relating $m_-(x,E)$ to the fundamental solution $\psi_-(x,E)$, one determines, making use of ergodicity (see e.g. \cite{Teschl_book_2000}, Eq. (5.40) therein), that
\begin{equation} \label{eq_limit1}
\lim_{n \to +\infty} \frac{1}{n} \sum_{j=0}^{n-1} \log \vert m_-(\mathrm{T}^j x,E) \vert = - L(\mathrm{T}, A^E) ~\mbox{, a.e. $x \in X$.}
\end{equation}

Observe that since $m_-(x,E)$ is continuous with 
\begin{equation} 
0 \leq \vert m_-(x,E) \vert < \infty ~\mbox{, $E \in \rho_-$ ,}
\end{equation}
unique ergodicity of $\mathrm{T}$ implies uniformity of the {\em{upper}} limit in (\ref{eq_limit1})  \cite{Furman_1997} \footnote{To obtain (\ref{eq_limit1a}) and (\ref{eq_limit2}), we use Theorem 1 in \cite{Furman_1997} which guarantees uniform convergence of {\em{upper}} limits in Caes\`aro means for continuous, sub-additive processes on a compact Hausdorff space $X$ equipped with a uniquely ergodic dynamical system. The proof in \cite{Furman_1997} carries over without changes to sub-additive processes $\{f_n\}$ which are only {\em{upper semi-continuous}}, thus in particular satisfy $\sup_{x \in X} f_1(x) < \infty$, which is used in the proof of\cite{Furman_1997}.
We mention that recently, Furman's result has been extended to even encompass certain discontinuous processes \cite{MaviJitomirskaya_2012}.}
\begin{equation} \label{eq_limit1a}
\frac{1}{n} \sum_{j=0}^{n-1} \log \vert m_-(\mathrm{T}^j x,E) \vert \leq - L(\mathrm{T}, A^E) + o(1) ~\mbox{, {\em{uniformly}} for $x \in X$.}
\end{equation}

The same argument yields
\begin{equation} \label{eq_limit2}
\frac{1}{n} \sum_{j=0}^{n-1} \log \vert c(\mathrm{T}^j x) \vert \leq \int \log \vert c(x) \vert \ud \mu(x) + o(1) ~\mbox{,}
\end{equation}
uniformly in $x \in X$ as $n \to +\infty$. 

Thus, combination of (\ref{eq_relLEs}), (\ref{eq_limit1a}), (\ref{eq_limit2}), and (\ref{eq_deriv_1}) yields (\ref{eq_estimderiv}) as claimed.
\end{proof}

Thus, letting $S_\pm(.,E)$ be continuous lifts of $s_\pm(.,E)$ to the subspaces of $\mathbb{C}^2$, Proposition \ref{prop_unifcontractcond} and (\ref{eq_invariance}) implies a $\mathcal{DS}$ for all $E \in \rho_-$, with $S_-(x,E)$ corresponding to the dominating subspace, in particular
\begin{equation}
A^E(x) S_-(x,E) = S_-(\mathrm{T} x,E) ~\mbox{.} \label{eq_invdom}
\end{equation}
That $S_+(x,E)$ determines the minorating subspace of the dominated splitting follows by first noting that from (\ref{eq_invariance}) one has for all $x \in X_0$,
\begin{eqnarray}
A^E(x) S_+(x,E) \subseteq S_+(\mathrm{T} x,E) ~\mbox{.} \label{eq_invmin} 
\end{eqnarray}
Moreover, one has:
\begin{lemma} \label{lem_transversality_2}
For all $E \in \rho$, $S_\pm(.,E)$ are uniformly transverse.
\end{lemma}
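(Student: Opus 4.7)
The plan is to reduce uniform transversality to pointwise non-coincidence of $s_-$ and $s_+$ via joint continuity (Lemma \ref{lem_contisections}) and compactness of $X$. Concretely, once $s_-(x,E) \neq s_+(x,E)$ is shown at every $x \in X$, the continuous and strictly positive function $x \mapsto d(s_-(x,E), s_+(x,E))$ --- with $d$ the spherical metric on $\overline{\mathbb{C}} \simeq \mathbb{PC}^2$ --- attains a strictly positive minimum on the compact set $X$, giving uniform transversality. For $E \in \rho_-$ (which contains $\mathbb{C}\setminus\mathbb{R}$), Proposition \ref{prop_unifcontractcond} combined with (\ref{eq_invariance}) already delivers the stronger statement $(\mathrm{T}, A^E) \in \mathcal{DS}$, so the real work lies in $E \in \rho \setminus \rho_-$, which is contained in $\mathbb{R}$. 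Arguing by contradiction, I would suppose that $s_-(x_0, E) = s_+(x_0, E)$ in $\overline{\mathbb{C}}$ for some $x_0 \in X$.

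If $x_0 \in X_0$, the cocycle $(\mathrm{T}, B^E)$ is bi-infinitely well-defined at $x_0$, so the (nowhere-vanishing) $\mathit{l}^2$-at-$\pm\infty$ solutions $\psi_\pm(\cdot, x_0, E)$ of $H_{x_0}\psi = E\psi$ exist, and one has $s_\pm(x_0, E) = \psi_\pm(-1, x_0, E)/\psi_\pm(0, x_0, E)$. Coincidence at two consecutive sites, combined with the uniqueness of the Cauchy problem valid on $X_0$, forces $\psi_- \propto \psi_+$, yielding an $\mathit{l}^2(\mathbb{Z})$ eigenvector of $H_{x_0}$ at eigenvalue $E$ --- contradicting $E \in \rho$.

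If $x_0 \in X\setminus X_0$, the orbit of $x_0$ meets $\mathcal{Z}(c)$, and a subcase analysis is needed. In the subcase $c(\mathrm{T}^{-1}x_0) = 0$, formula (\ref{eq_defsminus}) immediately gives $s_-(x_0,E) = 0$ and $s_+(x_0,E) = \infty$ (finiteness of the relevant $m_\pm$ uses the paper's earlier observation that $H_{x_0} = H_{x_0,-} \oplus H_{\mathrm{T}^{-1}x_0,+}$ together with $E \in \rho$ excludes $E$ from the discrete spectra of both halves), so transversality is automatic. In the remaining subcase $c(\mathrm{T}^k x_0) = 0$ only at some $k \neq -1$, the operator $H_{x_0}$ decouples at each such $k$ but sites $-1$ and $0$ lie together in a common (finite or semi-infinite) piece $H_{x_0,I}$, on which the Cauchy problem is non-degenerate. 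Once one identifies $s_\pm(x_0,E)$ with the initial-condition ratios of the $\mathit{l}^2$-at-$\pm\infty$ (or Dirichlet-at-endpoint) solutions of $H_{x_0,I}$, the Wronskian argument from the $X_0$ case reproduces an eigenvector of $H_{x_0,I}$, and hence of $H_{x_0} = \bigoplus_j H_{x_0, I_j}$, at $E$ --- again contradicting $E \in \rho$. The argument for $(\mathrm{T}, \widetilde{A}^E)$ via $\widetilde{s}_\pm$ is entirely analogous.

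The principal technical hurdle is this last identification in the singular subcase: matching the globally defined $m_\pm$ entering (\ref{eq_defsminus}) for $s_\pm(x_0,E)$ with the corresponding Green's-function elements of the decoupled piece $H_{x_0,I}$ containing sites $-1, 0$, so that the Wronskian/Cauchy-uniqueness reasoning from the regular $X_0$ setting transfers cleanly. Once this bookkeeping is in place, pointwise transversality holds across all of $\rho$, and the compactness reduction from the first paragraph completes the proof.
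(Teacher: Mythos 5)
Your route is genuinely different from the paper's, and its skeleton is viable, but as written it has a real hole at exactly the point you flag. First the comparison: the paper never analyzes the decoupled points $x\notin X_0$ at real energies at all. It works on $X_0$, where the Green's function identity (\ref{eq_greensfunctransv_1}) gives the \emph{quantitative} bound (\ref{eq_greensfunctransv}), $\vert s_+(x,E)-s_-(x,E)\vert \geq \mathrm{dist}(E;\Sigma)/\Vert c\Vert_\infty$; this extends to all of $X$ by density of $X_0$ and continuity (using (\ref{eq_mfuncboundscomplex}) to rule out $s_+=s_-=\infty$ for non-real $E$), and real $E\in\rho$ is then reached by applying the bound at $E+i\epsilon$ --- it is uniform in $\epsilon$ because it depends only on $\mathrm{dist}(E;\Sigma)$ --- and letting $\epsilon\to 0+$ via Lemma \ref{lem_contisections}. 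That buys an explicit angle estimate in terms of the inverse diagonal Green's function, while your compactness reduction (which is correct) only yields a qualitative positive minimum. Two further remarks on your first paragraph: your pointwise non-coincidence argument works verbatim for every $E\in\rho$, so the split into $\rho_-$ versus $\rho\setminus\rho_-$ is unnecessary; and it is just as well, because deducing transversality of $S_\pm$ for $E\in\rho_-$ from Proposition \ref{prop_unifcontractcond} alone is not immediate --- the dominated splitting obtained there has $S_-$ as dominating direction, but identifying $S_+$ with the complementary (minorating) section requires an additional argument.

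The genuine gap is the step you call ``bookkeeping'': at $x_0\notin X_0$ with $c(\mathrm{T}^{-1}x_0)\neq 0$ you assert, without proof, that $s_\pm(x_0,E)$ are initial-condition ratios of solutions on the block of $H_{x_0}$ containing sites $-1,0$, so that coincidence produces a block eigenvector. This is the heart of your proof of the only nontrivial case, and it must be carried out; it is true, and can be closed as follows. If $b\geq 0$ is the first index with $c(\mathrm{T}^{b}x_0)=0$, let $w$ solve the difference equation at sites $0,\dots,b$ (the equation at $b$ involves no site $b+1$), extended to site $-1$ through the equation at site $0$; solving $(H_{[0,b]}-E)\phi=\delta_0$ with $\phi$ proportional to $w$ gives $m_+(\mathrm{T}^{-1}x_0,E)=-w(0)/\bigl(\overline{c(\mathrm{T}^{-1}x_0)}\,w(-1)\bigr)$, hence $s_+(x_0,E)=w(-1)/w(0)$; analogously $s_-(x_0,E)=u(-1)/u(0)$ with $u$ the solution on the left part of the block satisfying the left boundary condition ($\ell^2$ at $-\infty$, or automatic at a zero bond). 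If the two ratios coincide, one glues $u$ with a multiple of $w$ across sites $-1,0$ (all bonds strictly inside the block are nonzero, so two consecutive values determine the solution there), obtaining a nonzero eigenvector of the block, hence of $H_{x_0}$, contradicting $E\in\rho$; the degenerate alternatives (a ratio equal to $\infty$, or a pole of the relevant $m$-function) need separate but harmless checks, as does the semi-infinite block case. Until this is written out, the proposal is incomplete at precisely its most delicate point --- the case analysis the paper's complex-approximation argument was designed to avoid.
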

\begin{proof}
For any $E \in \rho$ and $x \in X_0$, standard expressions for the Green's function of $H_x$ in terms of $\psi_\pm(x,E)$ yield
\begin{eqnarray} \label{eq_greensfunctransv_1}
\langle \delta_0, (H_x - E)^{-1} \delta_0 \rangle ^{-1} & = & c(x) \left( \dfrac{\psi_+(1,x,E)}{\psi_+(0,x,E)} - \dfrac{\psi_-(1,x,E)}{\psi_-(0,x,E)} \right) \nonumber \\
                                                                                              & = & \overline{c(\mathrm{T}^{-1} x)} \left( \dfrac{\psi_-(-1,x,E)}{\psi_-(0,x,E)} - \dfrac{\psi_+(-1,x,E)}{\psi_+(0,x,E)} \right) ~\mbox{,}
\end{eqnarray}
whence
\begin{eqnarray} \label{eq_greensfunctransv}
\left \vert s_+(x,E) - s_-(x,E) \right \vert \geq \dfrac{\mathrm{dist}(E;\Sigma)}{\Vert c \Vert_\infty} ~\mbox{, all $x \in X_0$ .} 
\end{eqnarray}

By (\ref{eq_mfuncboundscomplex}), $s_+(x,E)$, $s_-(x,E)$ can never {\em{both}} equal $\infty$ for $E \in \mathbb{C}\setminus\mathbb{R}$, whence (\ref{eq_greensfunctransv}) extends to all $x \in X$ by continuity. 

Given {\em{real}} $E \in \rho$, using (\ref{eq_greensfunctransv}) there exists $\eta>0$ only depending on $\mathrm{dist}(E;\Sigma)$ such that for all $\epsilon>0$,
\begin{equation}
\inf_{x \in X} \angle\{ S_+(x,E+i \epsilon) ~,~ S_-(x,E+i\epsilon) \} \geq \eta ~\mbox{,}
\end{equation}
which by continuity implies the claimed transversality taking $\epsilon \to 0+$.
\end{proof}

Proposition \ref{prop_unifcontractcond} was restricted to $E \in \rho_-$ since this guaranteed boundedness of $m_-(x,E)$, which was crucial to conclude uniformity of the upper limit in (\ref{eq_limit1a}). Nonetheless, having shown $(\mathrm{T}, A^E), ~(\mathrm{T}, \widetilde{A}^E)  \in \mathcal{DS}$ for all $E \in \mathbb{C}\setminus \mathbb{R}$, however already implies the same for all of $\rho$; we provide an argument for $(\mathrm{T}, A^E)$:

From Lemma \ref{lem_transversality}, (\ref{eq_invdom}) already holds for all $E \in \rho$. Moreover, using Lemma \ref{lem_contisections}, $S_+(.,E)$ extends continuously to all of $E \in \rho$, whence (\ref{eq_invmin}) holds for all $x \in X$. In summary, taking $M(x) \in M_2(\mathbb{C})$ with the first (second) column vector in the direction of $S_-(x,E)$ $(S_+(x,E))$, Lemma \ref{lem_transversality_2} implies that $M(x) \in GL(2,\mathbb{C})$ and
\begin{equation} \label{eq_conjds}
M(\mathrm{T} x)^{-1} A^E(x) M(x) = \begin{pmatrix} \lambda_1(x) & 0 \\ 0 & \lambda_2(x) \end{pmatrix} ~\mbox{, } 
\end{equation}
where $\lambda_j \in \mathcal{C}(X,\mathbb{C})$, $ 1 \leq j \leq 2$. Since $S_-(x,E)$ is transversal from $\ker A^E(x)$, one has $\lambda_1(x) \neq 0$, thus by continuity of $\lambda_1$,
\begin{equation} \label{eq_conjds1}
 \inf_{x \in X} \vert \lambda_1(x) \vert > 0 ~\mbox{.}
\end{equation}

Finally, non-triviality of the Lyapunov spectrum of $(\mathrm{T},A^E)$ for all $E \in \mathbb{C}\setminus\Sigma$, the equations (\ref{eq_conjds})-(\ref{eq_conjds1}), and unique ergodicity of $\mathrm{T}$ implies that for some $N \in \mathbb{N}$,
\begin{equation}
\left\vert \prod_{j=0}^{N} \lambda_1(\mathrm{T}^jx) \right\vert > \left\vert \prod_{j=0}^{N} \lambda_2(\mathrm{T}^jx) \right\vert ~\mbox{, all $x \in X$,}
\end{equation}
whence $(\mathrm{T},A^E) \in \mathcal{DS}$ (cf (\ref{eq_ds_euiv}) - (\ref{eq_ds_euiv_1})).

\section{Finishing up $\dots$} \label{sec_finishingup}
To complete the proof of Theorem \ref{thm_main_ammended}, it is left to show that $\mathcal{DS}$ of $(\mathrm{T}, A^E)$ or $(\mathrm{T}, \widetilde{A}^E)$ cannot occur on the spectrum. 

Suppose, for some $E \in \mathbb{C}$ one has $(\mathrm{T}, A^E) \in \mathcal{DS}$, correspondingly giving rise to a conjugacy of the form (\ref{eq_ds_euiv}), in particular
\begin{equation} \label{eq_johnson_1}
\det A_N^E(x) = \dfrac{\det M(\mathrm{T}^N x)}{ \det M(x) }\lambda_1(x) \lambda_2(x) ~\mbox{.}
\end{equation}
Similarly, if $E \in \mathbb{C}$ such that $(\mathrm{T}, \widetilde{A}^E) \in \mathcal{DS}$, (\ref{eq_johnson_1}) holds true with $A^E$ replaced by $\widetilde{A}^E$.

\begin{prop} \label{claim_johnson}
Suppose $E \in \mathbb{C}$ is such that $(\mathrm{T}, A^E) \in \mathcal{DS}$ or $(\mathrm{T}, \widetilde{A}^E) \in \mathcal{DS}$. Then, there is $\gamma>0$ such that for all $x \in X_0$ and $v \in \mathbb{C}^2 \setminus \{0\}$, there exists $c_{x,v} > 0$ and a subsequence $(k_l)_{l \in \mathbb{N}}$ of one of $\mathbb{Z}_\pm$ satisfying
\begin{equation} \label{eq_claimjohnson}
\Vert B^E_{k_{l}N}(x) v \Vert \geq c_{x,v} ~\mathrm{e}^{\vert k_l \vert \gamma} ~\mbox{, for all $l \in \mathbb{N}$.} 
\end{equation}
\end{prop}
\begin{proof}
For $E \in \mathbb{C}$ such that $(\mathrm{T}, A^E) \in \mathcal{DS}$, consider
\begin{equation}
(A^E)^\sharp(x):=\dfrac{A^E(x)}{\sqrt{ \vert \det A^E(x) \vert }} ~\mbox{,}
\end{equation}
i.e. on  $X_0 \times \mathbb{C}^2$, $(\mathrm{T}, (A^E)^\sharp)$ is well-defined and invertible for all iterates $n \in \mathbb{Z}$ with $\vert \det (A^E)^\sharp(x) \vert = 1$.

Let $x \in X_0$ be arbitrary and fixed, in particular by definition of the set $X_0$ in (\ref{eq_goodset}), one has
\begin{equation*}
c(T^n x) \neq 0 ~\mbox{, for all $n \in \mathbb{Z}$.} 
\end{equation*} 

Given $v \in \mathbb{C}^2 \setminus\{0\}$, decompose $v = v_1 + v_2$ with $v_j \in S_x^{(j)}$. Possibly changing to inverse dynamics, we may assume $v_1$ to be non-zero. From (\ref{eq_johnson_1}), one concludes that for $k \in \mathbb{N}$ 
\begin{equation} \label{eq_claimjohnson_1}
\Vert (A_{kN}^E)^\sharp(x) v_1 \Vert = \left\vert  \dfrac{\det M(x) }{\det M(\mathrm{T}^{kN} x)} \right\vert^{1/2} \Vert v_1 \Vert \prod_{j=0}^{k-1} \left\vert  \dfrac{ \lambda_1(\mathrm{T}^{jN} x)    }{ \lambda_2(\mathrm{T}^{jN} x)   }      \right\vert^{1/2} \geq {\tilde{c}}_{v} \Vert v_1 \Vert \mathrm{e}^{\frac{1}{2} k \log \lambda} ~\mbox{,}
\end{equation}
where $\lambda =  \inf_{x \in X} \frac{ \vert \lambda_1(x) \vert }{ \vert \lambda_2(x) \vert } > 1$ and ${\tilde{c}}_{v} = \frac{\inf_{x \in X} \vert \mathrm{det} M(x) \vert}{ \Vert \mathrm{det} M \Vert_\infty} >0$. Hence all non-trivial iterates of $(\mathrm{T}, (A^E)^\sharp)$ increase exponentially along a subsequence (depending both on $x$ and the initial condition $v$). An analogous argument shows that iterates of the component of $v$ along $v_2$ {\em{decay}} exponentially at rate at least $\lambda$.


To conclude the same for the respective solutions of $H_x \psi= E \psi$, observe that one has
\begin{equation} \label{eq_claimjohnson_2}
\Vert (A_{k N}^E)^\sharp(x) v \Vert = \dfrac{ \vert c(\mathrm{T}^{k N -1} x) \vert^{1/2} }{ \vert c(\mathrm{T}^{-1} x) \vert^{1/2}} \Vert B_{k N}^E(x) v \Vert \leq \left(\dfrac{\Vert c \Vert_\infty  }{ \vert c(\mathrm{T}^{-1} x) \vert } \right)^{1/2} \Vert B_{k N}^E(x) v \Vert~\mbox{,}
\end{equation}
which, combining the right most-sides of (\ref{eq_claimjohnson_1})--(\ref{eq_claimjohnson_2}), yields
\begin{equation}
\Vert B_{k N}^E(x) v \Vert \geq {\tilde{c}}_{v,x} \left(\dfrac{ \vert c(\mathrm{T}^{-1} x) \vert }{\Vert c \Vert_\infty  } \right)^{1/2} \Vert v \Vert \mathrm{e}^{\frac{1}{2} k \log \lambda} ~\mbox{,}
\end{equation}
eventually in $k$, as claimed in (\ref{eq_claimjohnson}).

The case when $(\mathrm{T}, \widetilde{A}^E) \in \mathcal{DS}$ follows along the same line, noticing that since $x \in X_0$, one has
\begin{equation}
\Vert (\widetilde{A}_n^E)^\sharp(x) v \Vert = \Vert \widetilde{B}_n^E(x) v \Vert ~\mbox{, $n \in \mathbb{Z}$,}
\end{equation}
which, in analogy to (\ref{eq_claimjohnson_1}), shows that eventually in $k$,
\begin{equation} \label{eq_claimjohnson_3a}
\Vert \widetilde{B}_{kN}^E(x) v \Vert \gtrsim \mathrm{e}^{\frac{1}{2} k \log \lambda} \Vert v \Vert ~\mbox{.}
\end{equation}
Thus, for $k$ sufficiently large, the conjugacy in (\ref{eq_conjugacy_modcocycles}) implies
\begin{equation} \label{eq_claimjohnson_3}
\mathrm{e}^{\frac{1}{2} k \log \lambda} \Vert \widetilde{M}(x) v \Vert \lesssim \Vert \widetilde{B}_{k N}^E(x) \widetilde{M}(x) v \Vert \lesssim  \dfrac{ \Vert \widetilde{M} \Vert_\infty \Vert c \Vert_\infty }{ \vert c(\mathrm{T}^{-1} x) \vert} \Vert B_{k N}^E(x) v \Vert ~\mbox{,}
\end{equation}
which completes the proof.
\end{proof}

For fixed $x \in X$, denote by $\mathcal{E}_g(H_x)$ the set of generalized eigenvalues of $H_x$, i.e. all $E \in \mathbb{C}$ which admit a non-trivial {\em{polynomially bounded}} solution of $H_x \psi = E \psi$ over $\mathbb{C}^\mathbb{Z}$. From the theorem of Sch'nol-Berezanskii \cite{Schnol_1957, Berezanskii_1968} it is well-known that
\begin{equation} \label{eq_schnol}
\overline{\mathcal{E}_g(H_x)} = \Sigma ~\mbox{,}
\end{equation}
for all $x \in X$.

Proposition \ref{claim_johnson} shows that for all $x \in X_0$, all non-trivial solutions to $H_x \psi = E \psi$ over $\mathbb{C}^\mathbb{Z}$ are {\em{not}} polynomially bounded, whence $E \not \in \mathcal{E}_g(H_x)$. But then, since $\mathcal{DS}$ is an open property in $\mathcal{C}(X,M_2(\mathbb{C}))$, we conclude that for all $x \in X_0$, $E$ cannot be a limit point of $\mathcal{E}_g(H_x)$ either. Thus, in summary, (\ref{eq_schnol}) implies the ``$\subseteq$''-statement of Theorem \ref{thm_main}.
Using Proposition \ref{claim_johnson} for $\widetilde{A}^E$, the same argument yields the ``$\subseteq$''-statement for the alternative Jacobi cocycle $(\mathrm{T},\widetilde{A}^E)$ in Theorem \ref{thm_main_ammended}.
 
\bibliographystyle{amsplain}

\end{document}